\providecommand{\U}[1]{\protect\rule{.1in}{.1in}}
\newtheorem{theorem}{Theorem}
\newtheorem{acknowledgement}[theorem]{Acknowledgement}
\newtheorem{lemma}[theorem]{Lemma}
\newenvironment{proof}[1][Proof]{\noindent\textbf{#1.} }{\ \rule{0.5em}{0.5em}}
\begin{document}

\title{\textbf{A manifold of pure Gibbs states of the Ising model on a Cayley tree}}
\author{Daniel Gandolfo, Jean Ruiz\\Centre de Physique Th\'{e}orique, Aix-Marseille Univ, \\CNRS UMR 7332, Univ Sud Toulon Var, \\13288 Marseille cedex 9, France;
\and Senya Shlosman\\Centre de Physique Th\'{e}orique, Aix-Marseille Univ, \\CNRS UMR 7332, Univ Sud Toulon Var, \\13288 Marseille cedex 9, France;\\Inst. of the Information Transmission Problems,\\RAS, Moscow, Russia}
\maketitle

\begin{abstract}
We study the Ising model on a Cayley tree. A wide class of new Gibbs states is exhibited.

\end{abstract}

\section{Introduction}

In this paper we consider the well-studied n.n. Ising model on the Bethe
lattice $\mathcal{T}^{k}$ (or Cayley tree). The natural title of the present
paper would be "New Gibbs states for the Ising model on the Bethe lattice".
However the subject of "New States on the Bethe lattice" is not new for
already a long time, and was discussed by many authors. So a better name might
be "All pure low-temperature states" (especially if it describes correctly
what follows).

In spite of its simplicity, some questions in the area of statistical
mechanics on trees are still open, and new phenomena are encountered.

The $\left(  +\right)  $ and $\left(  -\right)  $ phases of the Ising model on
the Bethe lattice -- the translation invariant states $\mu^{+}$ and $\mu^{-}$
-- are known to be extremal below the critical temperature $T_{c}$. An
uncountable number of non-translation invariant pure states $\mu^{\pm}$ was
constructed by Blekher and Ganikhodzhaev in \cite{BG}. These states describe
the coexistence of the $\mu^{+}$ and $\mu^{-}$ states along different `rigid
interfaces' $S$, analogous to the Dobrushin states in the 3D Ising model. One
should think about $S$ as a path, connecting two points `at infinity' of
$\mathcal{T}^{k}.$ In what follows we will call these states as BG-states and
denote them also by $\mu_{S}^{\pm}$.

The \textquotedblleft free state\textquotedblright\ $\mu^{0}$, i.e. the one
corresponding to `zero' or `empty' boundary condition is extremal in the
interval below the critical temperature $T_{c}$ and above the critical
temperature of the corresponding spin-glass model (see \cite{BRZ}). In
particular, it is not extremal at low temperatures. One does not know the
decomposition of $\mu^{0}$ into extremal states (a question of A. van Enter).
We will formulate below our conjecture about this decomposition.

Later Rakhmatullaev and Rozikov \cite{RR} introduced some new states, which
they call weakly periodic states. Their construction is somewhat indirect and
uses various subgroups of the free group.

In what follows we present an alternative construction of weakly periodic
states, as well as the construction of many more pure states. This
construction is a result of our attempt to understand the claim of \cite{RR}.

\section{New ground state configurations}

Let $\mathcal{T}^{k}=\left(  V,E\right)  $ be the uniform Cayley tree, where
each vertex has $k+1$ neighbors. Let $0\in V$ be its root. We consider the
n.n. ferromagnetic Ising model on $\mathcal{T}^{k}$ with interaction $J>0.$

The following simple remark, valid for the Ising model on any rooted tree,
will be very helpful.

Let $D\subset E$ be an arbitrary collection (finite or infinite) of edges of
$\mathcal{T}^{k}.$ Define the configurations $\sigma^{D+},\sigma^{D-}$ on
$\mathcal{T}^{k},$ which satisfy the property: for every bond $b\in E,$
$b=\left(  b_{1},b_{2}\right)  \subset V$%
\[
\sigma_{b_{1}}^{D+}\sigma_{b_{2}}^{D+}=\sigma_{b_{1}}^{D-}\sigma_{b_{2}}%
^{D-}=\left\{
\begin{array}
[c]{cc}%
-1 & \text{ for }b\in D,\\
+1 & \text{ for }b\notin D,
\end{array}
\right.
\]%
\[
\sigma_{0}^{D+}=+1,\ \sigma_{0}^{D-}=-1.
\]
The existence and uniqueness of the configurations $\sigma^{D+}$ and
$\sigma^{D-}$ is immediate. Note that the map $D\rightsquigarrow\sigma^{D+}$
is a one-to-one map between the subsets $D\subseteq E$ and the spin
configurations on $\mathcal{T}^{k}$ which have the value $+1$ at the root
$0\in V.$

Suppose a set $D\subseteq E$ is given. For every vertex $v\in V$ define
$d_{D}\left(  v\right)  $ to be the number of bonds in $D,$ which are incident
to $v.$ Define the number
\[
d_{D}=\max_{v}d_{D}\left(  v\right)  .
\]
Suppose now that the set $D$ is such that
\begin{equation}
d_{D}<(k-1)/2. \label{01}%
\end{equation}
(For example, if all bonds in $D$ are disjoint, then $d_{D}=1,$ in which case
$k$ has to be at least $4$). Then the following theorem holds.

\begin{theorem}
The configurations $\sigma^{D+}$ and $\sigma^{D-}$ are ground state configurations.
\end{theorem}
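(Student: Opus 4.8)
The plan is to verify the defining property of a ground state configuration directly. Recall that a configuration $\sigma$ on $\mathcal{T}^{k}$ is a ground state configuration iff for every finite $A\subseteq V$ the energy cost of flipping all the spins in $A$ is nonnegative; here the relevant energy differences are finite sums because each vertex has finite degree $k+1$. By construction the \emph{broken bonds} of $\sigma^{D+}$ (and likewise of $\sigma^{D-}$), i.e. the bonds $b=(b_{1},b_{2})$ with $\sigma_{b_{1}}\sigma_{b_{2}}=-1$, are exactly the bonds of $D$. Flipping the spins on a finite set $A$ reverses the product $\sigma_{b_{1}}\sigma_{b_{2}}$ precisely on the edge boundary $\partial A$ (the bonds with exactly one endpoint in $A$) and leaves it unchanged on all other bonds, so the new configuration has broken-bond set $D\,\triangle\,\partial A$, and the energy change equals $2J\bigl(|\partial A\setminus D|-|D\cap\partial A|\bigr)$. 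Moreover, by the bijection $D\rightsquigarrow\sigma^{D+}$ recorded above, every configuration differing from $\sigma^{D\pm}$ at only finitely many sites is obtained in exactly this way for a suitable finite $A$. Hence it suffices to prove the purely combinatorial inequality $|D\cap\partial A|\le|\partial A\setminus D|$, equivalently $2\,|D\cap\partial A|\le|\partial A|$, for every finite nonempty $A\subseteq V$.

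I would establish this from two estimates. First, a lower bound on $|\partial A|$: summing the degree $k+1$ over all $v\in A$ counts each edge inside $A$ twice and each edge of $\partial A$ once, so $|\partial A|=(k+1)|A|-2\,e(A)$, where $e(A)$ is the number of edges of $\mathcal{T}^{k}$ with both endpoints in $A$; since the subgraph induced on $A$ is a subforest of a tree we have $e(A)\le|A|-1$, whence $|\partial A|\ge(k-1)|A|+2$. Second, an upper bound on $|D\cap\partial A|$: each edge of $D\cap\partial A$ has exactly one endpoint in $A$, so it is counted exactly once in the sum over $v\in A$ of the number of $D$-edges at $v$ leading outside $A$; therefore $|D\cap\partial A|\le\sum_{v\in A}d_{D}(v)\le d_{D}\,|A|$.

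Combining the two estimates with the hypothesis (\ref{01}), namely $d_{D}<(k-1)/2$, gives $2\,|D\cap\partial A|\le 2d_{D}|A|<(k-1)|A|<|\partial A|$, so in fact $|D\cap\partial A|<|\partial A\setminus D|$ and the energy \emph{strictly} increases under every nontrivial finite flip. This proves that $\sigma^{D+}$ and $\sigma^{D-}$ are ground state configurations (indeed rigid ones). I do not expect a serious obstacle here; the only genuinely load-bearing step is the tree isoperimetric bound $|\partial A|\ge(k-1)|A|+2$, and that is exactly the place where condition (\ref{01}) is used — the bound $d_{D}\le(k-1)/2$ would already suffice, and the strict inequality is what upgrades the conclusion to strict energy increase.
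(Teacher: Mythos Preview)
Your argument is correct and yields essentially the same quantitative conclusion as the paper, namely an energy excess of at least $2J\bigl[(k-1)-2d_{D}\bigr]$ per flipped site, which is positive under condition~(\ref{01}). The paper packages this as Lemma~4, an estimate for a single \emph{connected} contour $\Gamma$ with interior $I=\mathrm{Int}(\Gamma)$, proved by induction on $|I|$: one peels off a leaf $x$ of $I$ and notes that at $x$ at most $d_{D}+1$ bonds are frustrated (the at most $d_{D}$ bonds of $D$ at $x$, plus the single bond joining $x$ to $I\setminus\{x\}$), giving the increment $2J[(k+1)-2(d_{D}+1)]$. Your route is a one-shot global count for an arbitrary finite $A$: the tree isoperimetry $|\partial A|\ge (k-1)|A|+2$ replaces the inductive leaf-removal, and the bound $|D\cap\partial A|\le d_{D}|A|$ replaces the per-step frustration count. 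The two arguments are equivalent in content; yours is a bit more streamlined (no reduction to connected pieces, no induction) and even gives the slightly sharper additive constant $+2$, while the paper's contour formulation feeds directly into the Peierls/cluster-expansion machinery used for Theorem~\ref{LowT}.
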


\begin{center}

\noindent
\end{center}

Let $\sigma$ be a ground state configuration, and $C\subset V$ be a connected
finite set. The configuration $\sigma_{C}$, obtained from $\sigma$ by flipping
all the spins of $\sigma$ in $C,$ will be called a \textit{connected
excitation} of $\sigma.$ We will call the ground state configuration $\sigma$
\textit{stable }(see \cite{DS}), if for any $v\in V,$ any excess energy level
$E>0,$ the number of connected excitations $\sigma_{C}$ of $\sigma,$
satisfying%
\[
H\left(  \sigma_{C}\right)  -H\left(  \sigma\right)  <E,\ \ v\in C
\]
is bounded uniformly in $v.$

\begin{theorem}
Under the condition $\left(  \ref{01}\right)  $ the ground state
configurations $\sigma^{D+}$ and $\sigma^{D-}$ are stable.
\end{theorem}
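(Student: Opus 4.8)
The plan is to prove a Peierls-type \emph{linear} lower bound: every connected excitation $\sigma_{C}$ of $\sigma=\sigma^{D+}$ (resp. $\sigma^{D-}$) raises the energy by at least a positive constant times $|C|$. Stability then follows at once from a counting argument. First I would identify the bonds that actually contribute to $H(\sigma_{C})-H(\sigma)$: if a bond $b$ has both endpoints in $C$, or both outside $C$, then the product $\sigma_{b_{1}}\sigma_{b_{2}}$ is unchanged by the flip, so it contributes $0$; only the edge boundary $\partial C$ (the bonds with exactly one endpoint in $C$) matters. For $b\in\partial C$ the product $\sigma_{b_{1}}\sigma_{b_{2}}$ changes sign, so by the defining property of $\sigma^{D\pm}$ a bond $b\in\partial C\setminus D$ (which had $\sigma_{b_{1}}\sigma_{b_{2}}=+1$) becomes broken at a cost $+2J$, while a bond $b\in\partial C\cap D$ gains $-2J$. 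Hence
\[
H(\sigma_{C})-H(\sigma)=2J\bigl(|\partial C|-2\,|\partial C\cap D|\bigr).
\]

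Next I would bring in the tree geometry. A finite connected $C\subset V$ spans a subtree with $|C|-1$ internal edges, and since $\mathcal{T}^{k}$ is $(k+1)$-regular, counting the $(k+1)|C|$ edge-endpoints incident to $C$ gives $|\partial C|=(k+1)|C|-2(|C|-1)=(k-1)|C|+2$. On the other hand every bond of $\partial C\cap D$ has exactly one endpoint in $C$, so $|\partial C\cap D|\le\sum_{v\in C}d_{D}(v)\le d_{D}\,|C|$. Setting $\delta:=k-1-2d_{D}$ — which is a \emph{positive integer} because $(\ref{01})$ forces $2d_{D}\le k-2$ — we obtain
\[
H(\sigma_{C})-H(\sigma)\ \ge\ 2J\bigl(\delta\,|C|+2\bigr)\ \ge\ 2J\delta\,|C|.
\]

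Finally, fix $v\in V$ and $E>0$. If a connected excitation with $v\in C$ satisfies $H(\sigma_{C})-H(\sigma)<E$, then necessarily $|C|<E/(2J\delta)=:N$. The number of connected subsets of $\mathcal{T}^{k}$ of cardinality less than $N$ containing $v$ is finite and, by homogeneity of the regular tree, is bounded by a quantity depending only on $N$ and $k$, not on $v$ (a crude counting bound for subtrees of bounded size suffices). This is exactly the uniform bound demanded by the definition of stability, and the argument is identical for $\sigma^{D+}$ and $\sigma^{D-}$. I do not expect a genuine obstacle here; the only point deserving care is the interplay between the isoperimetric identity $|\partial C|=(k-1)|C|+2$ and the sparsity estimate $|\partial C\cap D|\le d_{D}|C|$ — it is precisely their competition that makes $(\ref{01})$ the sharp hypothesis, turning the a priori bound $H(\sigma_{C})-H(\sigma)\ge 0$ into linear growth in $|C|$.
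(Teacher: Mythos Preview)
Your proof is correct and follows essentially the same strategy as the paper: a linear Peierls-type energy lower bound (the paper's Lemma~4 gives exactly $H(\sigma_{\Gamma})-H(\sigma^{D})\ge 2J\bigl[(k+1)-2(d_{D}+1)\bigr]\,|\mathrm{Int}(\Gamma)|=2J\delta\,|C|$) combined with the subgraph-counting bound (the paper's Lemma~5). The only minor difference is that the paper derives the energy estimate by induction, peeling off a leaf of $C$ one at a time, whereas you obtain it in one stroke from the tree isoperimetric identity $|\partial C|=(k-1)|C|+2$ together with $|\partial C\cap D|\le d_{D}|C|$; your route is slightly more direct and even recovers the extra $+2$ in the constant.
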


\section{New low-T Gibbs states}

Denote by $\mu^{D_{+}}$ and $\mu^{D_{-}}$ the Gibbs states corresponding to
the boundary conditions $\sigma^{D+}$ and $\sigma^{D-}.$ The quantification of
the last statement allows us to show the following

\begin{theorem}
\label{LowT} There exists a value $T_{k}$ of the temperature, such that for
all temperatures $T<T_{k}$ and all collections $D$ satisfying $\left(
\ref{01}\right)  $ the states $\mu^{D_{+}}$and $\mu^{D_{-}}$ are extremal
Gibbs states. (Here the temperature $T_{k}$ does not depend on $D$).
\end{theorem}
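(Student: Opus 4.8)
The plan is to run a Peierls (Pirogov--Sinai) analysis relative to the ground state configuration $\sigma^{D+}$, keeping every estimate uniform in $D$, and then to extract extremality from a uniqueness statement for the ``low contour density'' phase via a two--term decomposition trick. Since the global flip $\sigma\mapsto-\sigma$ is a symmetry of the interaction and interchanges $\sigma^{D+}$ and $\sigma^{D-}$, it suffices to treat $\mu^{D_+}$.

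First I would set up contours. Work in the ball $\Lambda_n=\{v:d(0,v)\le n\}$ with boundary condition $\sigma^{D+}$, and for a configuration $\sigma$ agreeing with $\sigma^{D+}$ off $\Lambda_n$ put $F(\sigma)=\{v:\sigma_v\ne\sigma^{D+}_v\}$. Its connected components $C_1,\dots,C_s$ are pairwise non--adjacent, and the set of wrong bonds $\Gamma(\sigma)=\{b:\sigma_{b_1}\sigma_{b_2}=-\sigma^{D+}_{b_1}\sigma^{D+}_{b_2}\}$ equals $\bigsqcup_i\partial C_i$; these are the contours. A direct computation gives $H(\sigma)-H(\sigma^{D+})=2J\sum_i\bigl(|\partial C_i|-2|\partial C_i\cap D|\bigr)$, and since $|\partial C_i\cap D|\le\sum_{v\in C_i}d_D(v)\le d_D|C_i|$ while $|\partial C_i|=(k-1)|C_i|+2$, condition $(\ref{01})$ — which forces the integer $k-1-2d_D$ to be at least $1$ — yields the Peierls lower bound $H(\sigma)-H(\sigma^{D+})\ge 2J\sum_i(|C_i|+2)$, with a constant \emph{independent of $D$}. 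The usual Peierls flip (erase the component of $F(\sigma)$ through a fixed vertex $v$; this is an energy--decreasing injection) then gives
\[
\mu^{D+}_{\Lambda_n}\bigl(\sigma_v\ne\sigma^{D+}_v\bigr)\ \le\ \sum_{m\ge1}N_k(m)\,e^{-2\beta J(m+2)}\ =:\ \epsilon(T),
\]
where $\beta=1/T$ and $N_k(m)\le c_k^{\,m}$ bounds the number of connected $m$--vertex subtrees of $\mathcal{T}^k$ through $v$; thus $\epsilon(T)\to0$ as $T\to0$, uniformly in $v$, $n$ and $D$. This is the quantitative version of the stability established in the previous theorem.

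Next I would pass to the limit. For $\beta$ large the polymer expansion of $\log Z^{D+}_{\Lambda_n}$ converges with a radius independent of $D$ (again by the uniform Peierls bound), so all correlation functions converge, $\mu^{D_+}:=\lim_n\mu^{D+}_{\Lambda_n}$ exists, is a Gibbs state, obeys $\mu^{D_+}(\sigma_v\ne\sigma^{D+}_v)\le\epsilon(T)$ for all $v$, and has exponentially decaying truncated correlations with a rate diverging as $T\to0$. For extremality the key auxiliary statement I would prove is a Pirogov--Sinai--type uniqueness: there is a threshold $\theta>0$ such that, for $T<T_k$, any Gibbs state $\nu$ with $\sup_v\nu(\sigma_v\ne\sigma^{D+}_v)<\theta$ must equal $\mu^{D_+}$. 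Granting this, suppose $\mu^{D_+}=\alpha\nu_1+(1-\alpha)\nu_2$ with $\nu_1,\nu_2$ Gibbs and $0<\alpha\le\tfrac12$. Then $\nu_2=(1-\alpha)^{-1}(\mu^{D_+}-\alpha\nu_1)\le2\mu^{D_+}$ as measures, so $\sup_v\nu_2(\sigma_v\ne\sigma^{D+}_v)\le2\epsilon(T)$; choosing $T_k$ small enough that also $2\epsilon(T_k)<\theta$ forces $\nu_2=\mu^{D_+}$, whence $\alpha\nu_1=\mu^{D_+}-(1-\alpha)\mu^{D_+}=\alpha\mu^{D_+}$ and $\nu_1=\nu_2=\mu^{D_+}$. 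Hence $\mu^{D_+}$, and with it $\mu^{D_-}$, is extremal, with $T_k$ independent of $D$.

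The main obstacle is the uniqueness statement of the previous paragraph, i.e.\ getting the contour expansion to converge \emph{uniformly in $D$} and using it to identify the phase concentrated near $\sigma^{D+}$. Two points need care: one must first upgrade the one--site control $\nu(\sigma_v\ne\sigma^{D+}_v)<\theta$ to genuine smallness of the density of large contours under $\nu$ (on a tree a size--$\ell$ contour incident to $v$ forces of order $\ell/(k+1)$ flipped spins near $v$, which lets one run this argument), and the entropy factor $c_k^{\,m}$ of the tree must be dominated by the Peierls weight $e^{-2\beta Jm}$, which is why $T_k$ has to be taken of order at most $\min\{(\text{Peierls threshold}),\,2J/\log c_k\}$. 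It is precisely the $D$--independence of the Peierls constant, supplied by $(\ref{01})$, that makes the resulting $T_k$ independent of $D$.
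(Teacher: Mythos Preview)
Your proposal is correct and follows essentially the same route as the paper: define contours as connected components of $\{v:\sigma_v\ne\sigma^{D+}_v\}$, prove a Peierls energy estimate linear in $|\mathrm{Int}(\Gamma)|$ with a constant depending only on $k-1-2d_D$, pair it with the standard entropy bound $(k+1)^{2n}$ for connected subgraphs of $\mathcal{T}^k$, and feed both into a convergent cluster/polymer expansion; the $D$--independence of $T_k$ comes in both arguments from the fact that the Peierls constant depends on $D$ only through the integer $k-1-2d_D\ge1$.

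Two minor contrasts worth noting. First, the paper derives its energy lemma by an inductive peeling argument (remove an extremal vertex of $I$ and compare), obtaining $H(\sigma_\Gamma)-H(\sigma^D)\ge 2J\bigl[(k+1)-2(d_D+1)\bigr]\,|\mathrm{Int}(\Gamma)|$; your direct computation via $|\partial C|=(k-1)|C|+2$ and $|\partial C\cap D|\le d_D|C|$ gives the same leading term $2J(k-1-2d_D)|C|$ plus an extra $+4J$, so your bound is in fact slightly sharper. Second, where the paper simply appeals to ``known arguments'' to pass from convergent expansions to extremality, you make this explicit via a uniqueness--in--the--phase statement combined with the $\nu_2\le2\mu^{D_+}$ decomposition trick. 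That is a legitimate way to close the argument, and your identification of the remaining work (upgrading one--site control to contour smallness, uniformly in $D$) is accurate; the paper leaves exactly this step implicit.
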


These low-temperature Gibbs states are small perturbations of the ground
states $\sigma^{D+}$ and $\sigma^{D-}$. It is interesting to observe that the
free energy of the states $\mu^{D+}$ and $\mu^{D-}$ is higher than that of the
states $\mu^{+}$, $\mu^{-}\ $and $\mu^{\pm}$.

\section{Examples and comments}
\begin{enumerate}

\item $D=\varnothing.$ In that case the states $\mu^{\varnothing+}$ and
$\mu^{\varnothing-}$ are just the $\left(  +\right)  $ and $\left(  -\right)
$ states.

\item In the case when $D$ consists from a single bond $b,$ the states $\mu^{b+}$
and $\mu^{b-}$ are among the non-translation invariant states $\mu_{S}^{\pm}$
constructed by Blekher and Ganikhodzhaev in \cite{BG}. The interface curve $S$
in that case is just the curve intersecting $\mathcal{T}^{k}$ along the single
bond $b.$ The same is true for every finite collection $D.$ Other states
constructed in \cite{BG} correspond to some countable families $D.$

\item Consider the case when $D=\bar{D}\ $-- a dimer covering of $\mathcal{T}%
^{k}.$ That means that for every vertex $v\in V$ there exists exactly one bond
$d\left(  v\right)  \in\bar{D},$ such that $v\in d\left(  v\right)  .$ The
existence of dimer coverings is straightforward.

\begin{center}
\includegraphics[width=10cm]{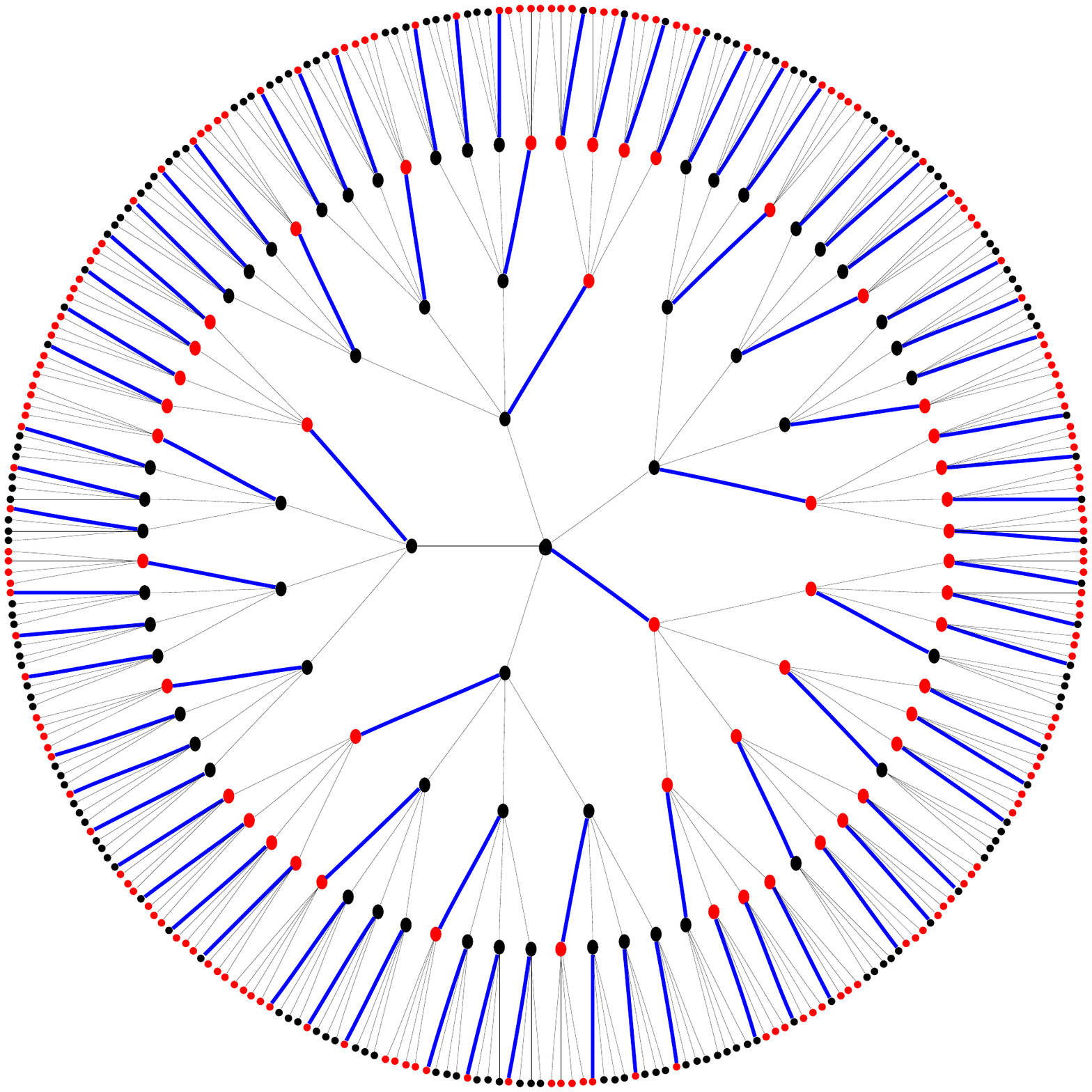}

{\footnotesize {Fig. 1. The configuration $\sigma^{\bar{D}+}$ on the Cayley
tree $\mathcal{T}^{4}$ (for four generations). Black and red dots correspond
to ($-$) and ($+$) spins.}}

{\footnotesize {Here }}$\bar{D}$ {\footnotesize is a dimer covering of
{$\mathcal{T}^{4}$ (blue bonds).} }
\end{center}

In the terminology of \cite{RR} the ground state configurations $\sigma
^{\bar{D}+}$ and $\sigma^{\bar{D}-}$ are called weakly periodic. According to
our theorems they generate the corresponding low-temperature extremal states
$\mu^{\bar{D}+}$ and $\mu^{\bar{D}-}$. The qualitative difference between the
states $\mu^{\bar{D}\pm}$ and the BG states $\mu_{S}^{\pm}$ is the following.
The states $\mu_{S}^{\pm}$ describe the coexistence of the $\left(  +\right)
$ and $\left(  -\right)  $ states separated by an interface $S$; that means
for example that for every $R>0$ one can find (many) points $x\in V,$ such
that the phase in the ball $U_{R}\left(  x\right)  $ looks as a small
perturbation of the $\left(  +\right)  $ or $\left(  -\right)  $ state.
Namely, any point $x\in V$ with $\mathrm{dist}\left(  x,S\right)  >\left(
1+\varepsilon\right)  R$ would go. On the other hand, the typical
configuration of the state $\mu^{\bar{D}+}$ or $\mu^{\bar{D}-}$ around any
point $x\in V$ has certain fractions of $\left(  +\right)  $ and $\left(
-\right)  $ spins.

\item Let $f_{D}\left(  \beta\right)  $ be the free energy of the Gibbs state of
our Ising model at inverse temperature $\beta,$ corresponding to the boundary
conditions $\sigma^{D\pm}.$ Unlike the classical case of the $\mathbb{Z}^{\nu
}$ lattices, the free energy of the Ising model on $\mathcal{T}^{k}$ might
depend on the boundary conditions. Clearly, for every $\beta$%
\[
f_{\bar{D}}\left(  \beta\right)  \geq f_{D}\left(  \beta\right)  \geq
f_{\varnothing}\left(  \beta\right)  ,
\]
moreover,%
\[
f_{\bar{D}}\left(  \beta\right)  >f_{\varnothing}\left(  \beta\right)
\]
for $\beta$ large enough. We conjecture that the states $\mu^{\bar{D}+}$ and
$\mu^{\bar{D}-}$ have the highest possible free energy among all the Gibbs
states at a given temperature on the tree $\mathcal{T}^{k}$ with $k=4$.

\item A more elaborate example is the `secondary dimer covering', corresponding
to a dimer covering $\bar{D}$. By this we mean any collection of bonds
$\tilde{D},$ such that

\begin{itemize}
\item every bond $b\in\tilde{D}$ is incident to exactly two bonds $d^{\prime
},d^{\prime\prime}$ from $\bar{D}$;

\item every bond $d\in\bar{D}$ is incident to exactly one bond $b\ $from
$\tilde{D}$.
\end{itemize}

\begin{center}
\includegraphics[width=10cm]{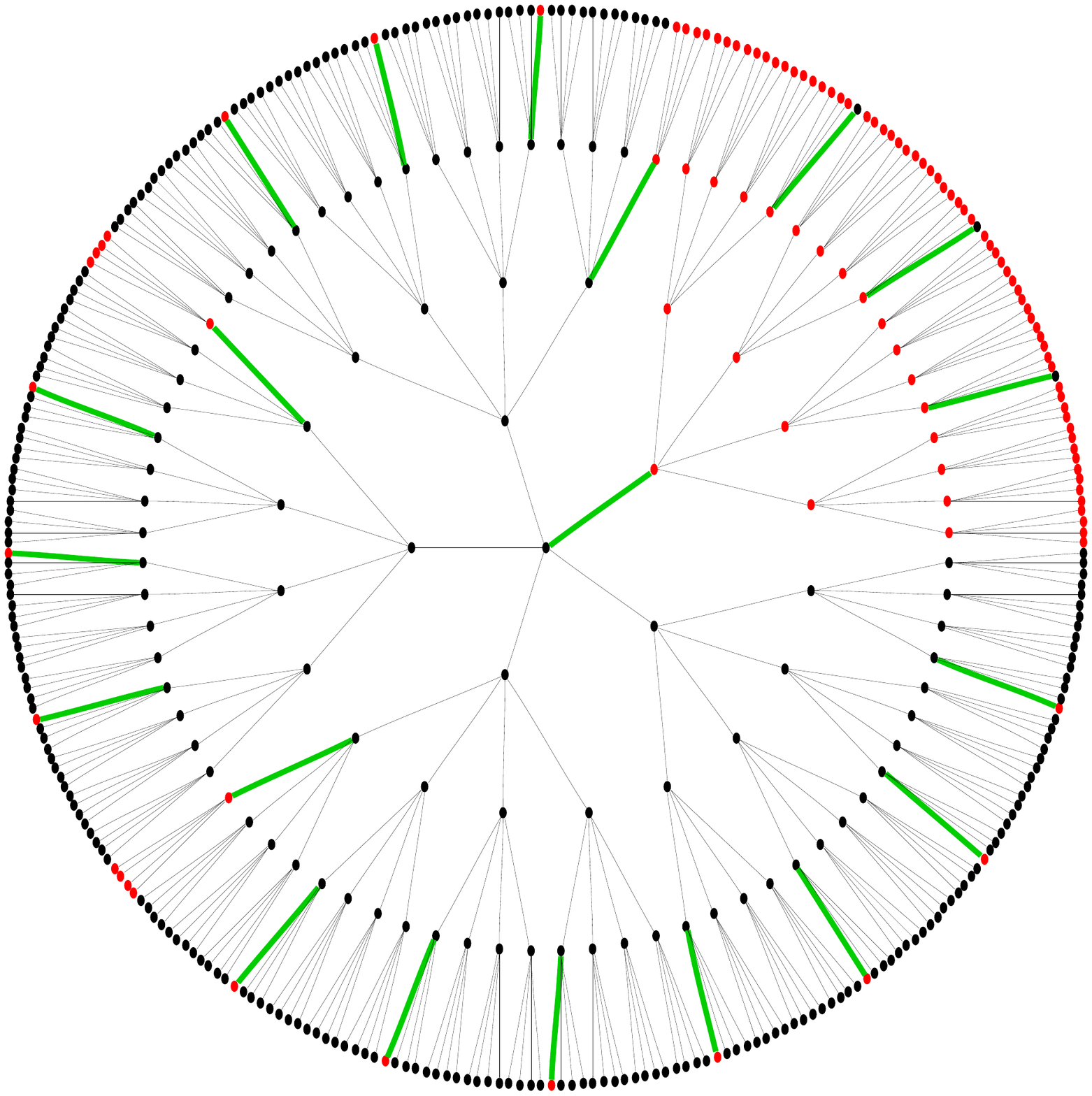}

{\footnotesize {Fig. 2. Secondary dimer covering $\tilde{D}$ (green bonds) on
the Cayley tree $\mathcal{T}^{4}$ (for four generations) corresponding to the
primary dimer covering $\bar{D}$ (blue bonds) of~Figure 1.}}
\end{center}

\noindent Again, for low temperatures we have
\[
f_{\bar{D}}\left(  \beta\right)  >f_{\tilde{D}}\left(  \beta\right)
>f_{\varnothing}\left(  \beta\right)  .
\]

\item Yet another example can be obtained by considering the family $D^{\ast},$
corresponding to the following monomer-dimer covering. By this we mean (see
Fig. 3) the collection of bonds $b\in E,$ such that

\begin{itemize}
\item for every $b^{\prime}\neq b^{\prime\prime}\in D^{\ast}$ the distance
$\mathrm{dist}\left(  b^{\prime},b^{\prime\prime}\right)  \geq2,$

\item for every $v\in V$ the distance $\mathrm{dist}\left(  v,D^{\ast}\right)
\leq1.$
\end{itemize}

\begin{center}
\includegraphics[width=10cm]{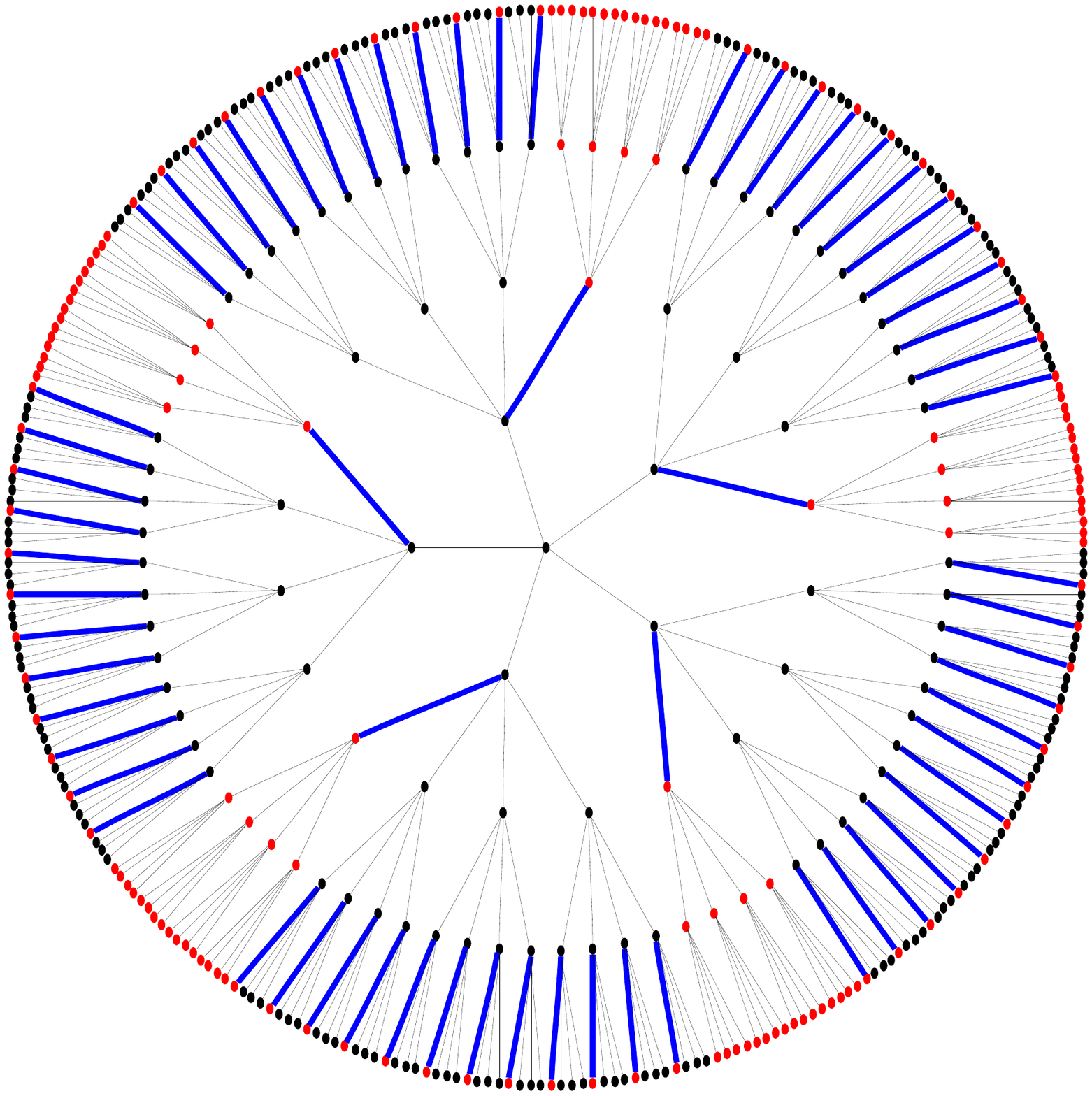}

{\footnotesize {Fig. 3. Monomer-dimer covering.}}
\end{center}

\item A final example is the family $\hat{D},$ corresponding to the `path
covering' of the tree $\mathcal{T}^{k}.$ Namely, the family $\hat{D}$ is
defined by the property that every $v\in V$ belongs to precisely two bonds
from $\hat{D}.$ Thus the bonds from $\hat{D}$ split into double infinite
non-intersecting paths, passing through each vertex of $\mathcal{T}^{k}.$ To
apply our results we need in this case $k\geq6.$

\begin{center}
\includegraphics[width=10cm]{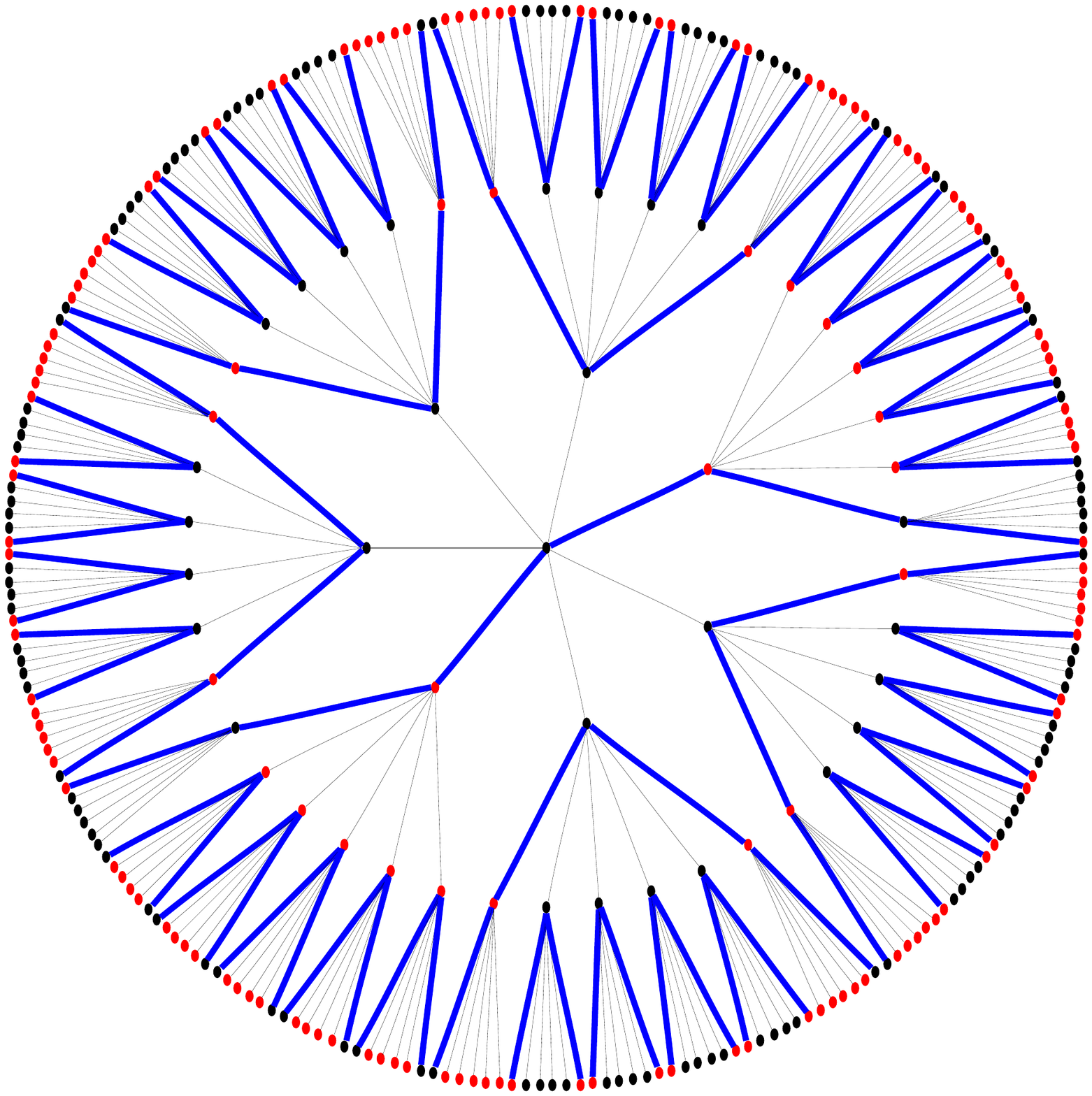}

{\footnotesize {Fig. 4. Path covering of the tree $\mathcal{T}^{6}$.}}
\end{center}

\bigskip

As the reader can easily see, many more regular, as well as infinitely more
irregular examples can be constructed in the manner above.

\item Next, we formulate our conjecture about the decomposition of the low
temperature free state $\mu^{0}$ into extremal states. We think that this
decomposition should be into our measures  $\mu^{D_{+}}$ and $\mu^{D_{-}}$
with various sets $D$ which are sparse enough (depending on the temperature).

\end{enumerate}

\section{The proofs}

The proof of Theorem \ref{LowT} is based on contour representation of the
model. The remaining theorems follow from it.

Consider a configuration $\sigma$ that coincides with a ground state
configuration $\sigma^{D}$ outside finite box $V$. A site $x\in V$ is called
incorrect, iff $\sigma_{x}\neq\sigma_{x}^{D}$. The set $J$ of incorrect sites
then decomposes into maximal connected components, $J=I_{1}\cup I_{2}\cup
\dots\cup I_{n}$. (Here a set of sites is called connected if for any two
sites $x,y$ there exist a sequence $x=x_{1},x_{2},\ldots,x_{j},\ldots x_{m}=y$
for which $x_{j}$ and $x_{j+1}$ are n.n. for all $i=1,\ldots,m-1$.)

Let us also introduce the external boundary $\partial_{E}I$ of $I$ to be the
set of vertices of the complement of $I$ that have a n.n. in $I$.

A contour $\Gamma_{i}$ of the configuration $\sigma$ is the subgraph defined
by the set of vertices $V(\Gamma_{i})=I_{i}\cup\partial_{E}I_{i}$ and all
bonds of $\mathcal{T}_{k}$ between them. The interior $\mathrm{Int}\left(
\Gamma_{i}\right)  $ of $\Gamma_{i}$ is the set $I_{i}.$

Each connected subgraph $\Gamma_{i}=\{V(\Gamma_{i}),E(\Gamma_{i})\}$
($i=1,\ldots,n$), where $V(\Gamma_{i})=I_{i}\cup\partial_{E}I_{i}$ and
$E(\Gamma_{i})$ are all the edges between n.n. pairs of points in
$V(\Gamma_{i})$, is called a contour of the configuration $\sigma$.

Obviously, there is a one-to-one correspondence between the configurations and
the sets of all its contours, provided the boundary condition $\sigma^{D}$ is
given. Every collection of contours thus constructed is called a compatible collection.

This allows to write the partition function in a finite ball $V_{r}$ of radius
$r$ as follows:
\[
Z(V_{r})=\sum_{\sigma}\exp\{-\beta\lbrack H(\sigma)-H(\sigma^{D}%
)]\}=\sum_{\{\Gamma_{1},\ldots,\Gamma_{n}\}_{c}}\prod_{i=1}^{n}\varrho
(\Gamma_{i})
\]
with $\varrho(\Gamma)=\exp\{-\beta\lbrack H(\sigma_{\Gamma})-H(\sigma^{D}%
)]\}$. Here the first sum is over all configurations that coincide with the
configuration $\sigma^{D}$ outside the ball $V_{r}$, the second sum runs over
all compatible families of contours with vertex set in $V_{r}$, and
$\sigma_{\Gamma}$ denotes the configuration that contains only one contour
$\Gamma$.

The two following lemmas provide the energy and entropy estimates for contours.

\begin{lemma}
For any contour $\Gamma$ the excess energy satisfies
\begin{equation}
H(\sigma_{\Gamma})-H(\sigma^{D})\geq2 J\left[  \left(  k+1\right)  -2\left(
d_{D}+1\right)  \right]  |\mathrm{Int}\left(  \Gamma\right)  |. \label{02}%
\end{equation}

\end{lemma}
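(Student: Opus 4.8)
The plan is to estimate the excess energy $H(\sigma_\Gamma)-H(\sigma^D)$ of a single contour $\Gamma$ by bookkeeping the broken bonds. Recall that $\sigma_\Gamma$ differs from $\sigma^D$ exactly on the set $I=\mathrm{Int}(\Gamma)$, which is a connected finite set of vertices, and flipping the spins on $I$ only changes the sign of the bond-products $\sigma_{b_1}\sigma_{b_2}$ for bonds $b$ in the edge-boundary of $I$, i.e. bonds with exactly one endpoint in $I$. Denote this edge-boundary by $\partial I$. Each such bond contributes $\pm 2J$ to the excess energy depending on whether it was a ``correct'' ($+1$-product, i.e. $b\notin D$) or ``incorrect'' ($-1$-product, i.e. $b\in D$) bond in $\sigma^D$: a correct bond becomes unsatisfied and costs $+2J$, while a bond of $D$ crossing $\partial I$ becomes satisfied and gains $-2J$. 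Hence
\[
H(\sigma_\Gamma)-H(\sigma^D)=2J\bigl(\,|\partial I|-2\,|\partial I\cap D|\,\bigr).
\]

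First I would bound $|\partial I|$ from below. Since $\mathcal{T}^k$ is a tree, every finite connected vertex set $I$ with $|I|=m$ spans exactly $m-1$ internal edges, and each vertex has degree $k+1$, so the number of edges with at least one endpoint in $I$ is $\tfrac12\bigl((k+1)m-2(m-1)\bigr)+\ldots$ — more directly, counting edge-endpoints: $(k+1)m = 2(m-1) + |\partial I|$, which gives $|\partial I| = (k+1)m - 2m + 2 = (k-1)m+2 \ge (k-1)|\mathrm{Int}(\Gamma)|$. Next I would bound $|\partial I\cap D|$ from above: a bond in $\partial I\cap D$ is incident to some vertex $v\in I$, and by definition of $d_D$ each vertex is incident to at most $d_D$ bonds of $D$; summing over $v\in I$ gives $|\partial I\cap D|\le d_D\,|I|$ (in fact one can do slightly better since internal bonds of $D$ are not counted, but this crude bound suffices). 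Combining,
\[
H(\sigma_\Gamma)-H(\sigma^D)\ge 2J\bigl((k-1)|I| - 2 d_D|I|\bigr) = 2J\bigl[(k-1)-2d_D\bigr]|\mathrm{Int}(\Gamma)|.
\]

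This is very close to the claimed inequality; the stated bound has $(k+1)-2(d_D+1)=(k-1)-2d_D$, so in fact the two expressions coincide and the estimate above is exactly what is needed. The one point requiring care — and the only real obstacle — is the precise definition of a contour and what ``$\sigma_\Gamma$'' means: I must make sure that flipping $\mathrm{Int}(\Gamma)=I$ indeed only affects bonds in $\partial I$ and that every such bond genuinely flips its product (no cancellation), which is immediate from the product formula $\sigma_{b_1}^{D\pm}\sigma_{b_2}^{D\pm}=\pm1$ and the fact that flipping one endpoint negates the product. A secondary subtlety is whether condition $(\ref{01})$, $d_D<(k-1)/2$, enters here: it guarantees the coefficient $(k-1)-2d_D$ is strictly positive, so the contour weight $\varrho(\Gamma)$ decays exponentially in $|\mathrm{Int}(\Gamma)|$ — but strictly speaking the inequality $(\ref{02})$ holds regardless of $(\ref{01})$; the hypothesis is only needed for the consequences. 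I would therefore present the proof as the chain of elementary (in)equalities above, with one line each for the edge-count identity on trees, the $d_D$ bound on $|\partial I\cap D|$, and the final combination.
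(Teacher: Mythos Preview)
Your proof is correct and, arguably, cleaner than the one in the paper. The paper proceeds by induction on $|\mathrm{Int}(\Gamma)|$: it locates a site $x\in I$ all of whose $k$ offspring lie outside $I$, flips the spin at $x$ to obtain a contour $\Gamma'$ with $\mathrm{Int}(\Gamma')=I\smallsetminus\{x\}$, and bounds the single-step energy change $H(\sigma_\Gamma)-H(\sigma_{\Gamma'})\ge 2J[(k+1)-2(d_D+1)]$ by noting that $x$ has at most $d_D$ frustrated bonds in $\sigma^D$ plus one bond into $I$. You instead write down the exact identity $H(\sigma_\Gamma)-H(\sigma^D)=2J(|\partial I|-2|\partial I\cap D|)$ and combine the tree formula $|\partial I|=(k-1)|I|+2$ with the trivial bound $|\partial I\cap D|\le d_D|I|$. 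Your route has the advantage of giving an exact expression for the excess energy (only the $D$-bound is an inequality) and of making transparent where the $+2$ is being dropped; the paper's inductive peeling has the mild advantage of not invoking the edge-count identity for subtrees, though that identity is itself a one-line induction. Either way, the final bound $(k+1)-2(d_D+1)=(k-1)-2d_D$ is the same, and you are right that condition~(\ref{01}) is not used in the inequality itself but only to make the coefficient positive downstream.
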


\begin{proof}
We prove $\left(  \ref{02}\right)  $ by induction. If $\mathrm{Int}\left(
\Gamma\right)  $ consists of one point, then $H(\sigma_{\Gamma})-H(\sigma
^{D})\geq2 J\left[  \left(  k+1\right)  -2d_{D}\right]  .$ Let $\Gamma
=\{V(\Gamma),E(\Gamma)\}$, where $V(\Gamma)=I\cup\partial_{E}I,$ and suppose
that the root $0\in I.$ Let $x\in I,$ $x\neq0$ be such a site that all the $k$
offsprings of $x$ do not belong to $I.$ Note that the configuration
$\sigma_{\Gamma}^{x},$ obtained from $\sigma_{\Gamma}$ by flipping a spin at
$x,$ also has one contour, and so $\sigma_{\Gamma}^{x}=\sigma_{\Gamma^{\prime
}},$ and $\mathrm{Int}\left(  \Gamma^{\prime}\right)  =I\smallsetminus x.$ The
contribution of the site $x$ to the energy $-H(\sigma_{\Gamma}^{x})$ is at
least $J [\left(  k+1\right)  -2\left(  d_{D}+1\right) ] ;$ indeed, the site
$x$ in the configuration $\sigma_{D}$ has at most $d_{D}$ frustrated bonds,
while the number of bonds connecting $x$ to $I\smallsetminus x$ (which also
can be frustrated) is one. Since the contribution of the site $x$ to the
energy $H(\sigma_{\Gamma})$ is exactly the negative of its contribution to
$-H(\sigma_{\Gamma}^{x}),$ we have%
\[
H(\sigma_{\Gamma})-H(\sigma_{\Gamma}^{x})\geq2 J\left[  \left(  k+1\right)
-2\left(  d_{D}+1\right)  \right]  .
\]

\end{proof}

The following lemma is well-known (see e.g. \cite{O}).

\begin{lemma}
Let $G$ be a graph of maximal degree $k+1.$ Then the number of connected
subgraphs $\Gamma\subset G$ with $n$ bonds, containing a given vertex is
bounded from above by
\[
(k+1)^{2n}%
\]

\end{lemma}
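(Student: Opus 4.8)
The plan is to prove the counting bound on connected subgraphs by a standard spanning-tree / depth-first exploration argument, which is the classical way such Peierls-type entropy estimates are established. First I would reduce to the case of trees: any connected subgraph $\Gamma$ with $n$ bonds contains a spanning tree $T$ with $n'\le n$ edges on the same vertex set, and distinct $\Gamma$'s may share the same spanning tree, so I instead bound the number of rooted spanning \emph{trees} of $G$ with at most $n$ edges through the given vertex, and then observe that $\Gamma$ is determined by such a tree together with a choice of subset of the remaining edges among its $\le n+1$ vertices. Actually the cleaner route, and the one I would present, is to encode each connected subgraph directly by a closed walk.

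Here is the main step. Fix the distinguished vertex $v$ and a connected subgraph $\Gamma$ with $n$ edges containing $v$. Choose a spanning tree of $\Gamma$ rooted at $v$ and perform a depth-first traversal of it: this traversal crosses each tree edge exactly twice (once down, once up), producing a closed walk of length $2(n-1)$ or less that starts at $v$, stays within $\Gamma$, and visits every vertex of $\Gamma$. At each step of the walk we are at a vertex of $G$, which has at most $k+1$ neighbours, so there are at most $(k+1)$ choices for the next step; hence the number of such walks of length $\le 2n$ is at most $(k+1)^{2n}$ (summing the geometric-type bound $\sum_{\ell\le 2n}(k+1)^\ell$ is still $\le (k+1)^{2n}$ once one is slightly careful, or one simply pads every walk to length exactly $2n$ by repeating the last vertex). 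Since the walk determines its edge set, hence the spanning tree, and $\Gamma$ is recovered as... no — I would instead note that it suffices that the map $\Gamma \mapsto (\text{its DFS walk})$ be injective, which it is once we fix, say, the lexicographically-minimal spanning tree and the canonical DFS order with respect to a fixed labelling of $G$'s edges. Therefore $\#\{\Gamma\} \le \#\{\text{walks of length }\le 2n\text{ from }v\} \le (k+1)^{2n}$.

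The one subtle point — and the step I would flag as the only real obstacle — is ensuring the encoding $\Gamma \mapsto$ walk is genuinely injective, i.e.\ that from the walk we can reconstruct $\Gamma$ and not merely its spanning tree. This is handled by fixing once and for all a linear order on $E(G)$ (possible since $G$ is countable), letting $T(\Gamma)$ be the spanning tree of $\Gamma$ obtained greedily in that order, and letting the DFS always explore neighbours in the fixed order; then $\Gamma$ has \emph{exactly one} associated walk, and from a walk $w$ one recovers the vertex set $V(w)$, then $T(\Gamma)$ as the edge set of $w$, but $\Gamma$ itself need not be recoverable. So in fact the honest statement one proves this way is $\#\{\Gamma : n \text{ edges}, v\in\Gamma\} \le \#\{\text{trees}\}\cdot 2^{\binom{n+1}{2}}$, which is too weak. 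The correct fix, which I would use, is to traverse $\Gamma$ itself rather than a spanning tree: do a DFS on the graph $\Gamma$ where from the current vertex we move along the least-labelled incident edge of $\Gamma$ not yet used, backtracking when stuck; this is an Eulerian-type exploration that uses every edge of $\Gamma$ exactly twice and thus has length exactly $2n$, from whose edge multiset $\Gamma$ is read off directly. Each of the $2n$ steps again offers $\le k+1$ choices (the possible incident edges at the current vertex), and the exploration is deterministic given $\Gamma$, so the map is injective into walks of length $2n$ from $v$, giving the bound $(k+1)^{2n}$. I would write this version, since it is both correct and matches the stated exponent exactly.
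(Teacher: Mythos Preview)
The paper does not actually prove this lemma: it is stated as ``well-known'' with a reference to Ore's book on the four-color problem, and no argument is given. So there is no paper proof to compare against.

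Your final argument---encode $\Gamma$ by a deterministic closed walk of length exactly $2n$ that traverses each edge of $\Gamma$ twice (equivalently, an Eulerian circuit on the doubled edge set, found by a fixed-order DFS/Hierholzer procedure), observe that the walk determines its edge set and hence $\Gamma$, and bound the number of length-$2n$ walks from $v$ in $G$ by $(k+1)^{2n}$---is the standard one and is correct. The injectivity you were worried about is genuine and your fix is the right one: because every edge of $\Gamma$ appears in the walk, the walk's edge set \emph{is} $E(\Gamma)$, so distinct $\Gamma$'s give distinct walks.

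Two presentational comments. First, the write-up reads as a diary of false starts; in a final version you should present only the working argument (the doubled-edge Eulerian walk) and drop the spanning-tree detours, which as you yourself note do not recover $\Gamma$. Second, you should say explicitly why a closed walk traversing each edge twice exists and can be chosen canonically: doubling every edge of the connected graph $\Gamma$ makes every vertex even-degree, so an Eulerian circuit through $v$ exists, and fixing a global edge ordering makes Hierholzer's (or the greedy DFS) output unique. With those edits the proof is clean and matches the stated bound exactly.
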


With these two estimates, we have good control of cluster expansions for
partition functions and correlation functions at low temperatures. These
expansions allow to prove by known arguments that the states $\mu^{D_{+}}$and
$\mu^{D_{-}}$ are extremal.

\begin{acknowledgement}
We are grateful to A. van Enter, P. Bleher and U. Rozikov for helpful
discussions and comments.
\end{acknowledgement}

\end{document}